\newtheorem{theorem}{Theorem}
\newtheorem{lemma}[theorem]{Lemma}
\newtheorem{definition}[theorem]{Definition}
\newtheorem{conjecture}[theorem]{Conjecture}
\newtheorem{remark}[theorem]{Remark}
\newcommand{\proofsubparagraph}{\paragraph}
\DeclareMathOperator*{\argmax}{arg\,max}
\newcommand{\Model}{k\mathrm{-OV}_0^\alpha(n)}
\newcommand{\Dist}[1]{k\mathrm{-OV}_{#1}^\alpha(n)}
\newcommand{\Planted}{k\mathrm{-OV}_1^\alpha(n)}
\newcommand{\Plant}{\mathsf{Plant}}
\newcommand{\Pk}{\mathcal{P}_k}
\newcommand{\Olog}{\widetilde{O}}
\newcommand{\polylog}{\operatorname{polylog}}
\newcommand{\OV}[1][U]{\mathbf{#1}}
\newcommand{\inst}{\OV}
\newcommand{\E}{\mathop{\mathbb{E}}}
\newcommand{\yes}{\textsc{yes}}
\newcommand{\no}{\textsc{no}}
\title{The Planted Orthogonal Vectors Problem}
\author{David Kühnemann\thanks{University of Amsterdam. \href{mailto:david.kuhnemann@student.uva.nl}{david.kuhnemann@student.uva.nl}. Part of this work done while visiting Bocconi.} \and Adam Polak\thanks{Bocconi University. \href{mailto:adam.polak@unibocconi.it}{adam.polak@unibocconi.it}.} \and Alon Rosen\thanks{Bocconi University. \href{mailto:alon.rosen@unibocconi.it}{alon.rosen@unibocconi.it}. Work supported by European Research Council (ERC) under the EU’s Horizon 2020 research and innovation programme (Grant agreement No. 101019547) and Cariplo CRYPTONOMEX grant.}}
\date{}
\begin{document}

\maketitle

\begin{abstract}
In the $k$-Orthogonal Vectors ($k$-OV) problem we are given $k$ sets, each containing $n$ binary vectors of dimension $d=n^{o(1)}$, and our goal is to pick one vector from each set so that at each coordinate at least one vector has a zero. It is a central problem in fine-grained complexity, conjectured to require $n^{k-o(1)}$ time in the worst case.

We propose a way to \emph{plant} a solution among vectors with i.i.d.~$p$-biased entries, for appropriately chosen $p$, so that the planted solution is the unique one. Our conjecture is that the resulting $k$-OV instances still require time $n^{k-o(1)}$ to solve, \emph{on average}.

Our planted distribution has the property that any subset of strictly less than $k$ vectors has the \emph{same} marginal distribution as in the model distribution, consisting of i.i.d.~$p$-biased random vectors. We use this property to give average-case search-to-decision reductions for $k$-OV.
\end{abstract}

\section{Introduction}

The security of cryptographic systems crucially relies on heuristic assumptions about average-case hardness of certain computational problems.  Sustained cryptanalysis alongside technological advances
such as large-scale quantum computers, put these hardness assumptions under constant risk of being invalidated. It is therefore desirable to try to design cryptographic schemes based on new computational problems, preferably ones whose hardness is well-studied.

The field of computational complexity developed over the last fifty years a good understanding of the hardness of certain problems -- e.g., SAT is widely believed to require at least superpolynomial, maybe even exponential time~\cite{Williams19} -- however these are worst-case problems, and hence unsuitable for direct use as a basis for cryptography.

Fine-grained complexity~\cite{VVW18} is a younger branch of computational complexity that studies ``hardness of easy problems'', i.e., problems known to be solvable in polynomial time but supposedly not faster than some specified polynomial, say not faster than in cubic time. It gives rise to \emph{fine-grained cryptography}~\cite{BRSV17,LaVigneLW19,Rosen20}, the idea that it might be possible to build cryptography, notably public-key encryption, based on conjectured average-case hardness of polynomial-time problems studied in fine-grained complexity. These problems are easier than NP-hard ones, 
%(which seem too hard to be a basis for cryptography), 
but for polynomials of sufficiently high degree may still be hard enough to give honest parties an adequate advantage over malicious attackers.

\subsection{The k-Orthogonal Vectors Problem}
The Orthogonal Vectors (OV) problem~\cite{Williams05}, together with its generalization $k$-OV, is one of the three main hard problems studied in fine-grained complexity, alongside 3SUM and APSP~\cite{VVW18}. Arguably, among the three, OV is the one whose (worst-case) hardness we understand the most -- in particular because it is implied by the Strong Exponential Time Hypothesis (SETH)~\cite{ImpagliazzoPZ01}, which is about the very well studied SAT problem.

We say that vectors $u_1, ..., u_k \in \{0,1\}^d$ are \emph{orthogonal} if, for all $j \in [d]$, $\prod_{\ell=1}^k u_\ell[j] = 0$, meaning that for every coordinate there is at least one zero entry among the $k$ vectors.
For $k \geq 2$, let ${U_1,...,U_k}$ each be a collection of $n$ $d$-dimensional binary vectors, which we view as matrices in $\{0,1\}^{n\times d}$. We denote by $U_{\ell,i}$ the $i$-th vector of $U_\ell$.
The $k$-Orthogonal Vectors problem ($k$-OV) asks whether there exist $(s_1,\ldots,s_k) \in [n]^k$ such that $U_{1,s_1},U_{2,s_2},\ldots,U_{k,s_k}$ are orthogonal.

\paragraph{Worst-case complexity.}
The naive algorithm solves $k$-OV in time $O(n^kd)$. For any fixed constant $c$, the algorithms by Abboud et al.~\cite{AbboudWY15} and Chan and Williams~\cite{ChanW21} solve OV in dimension $d=c \log n$ in time $O(n^{2-\varepsilon_c})$ with $\varepsilon_c > 0$. However, Gao et al.~\cite{GaoIKW19} conjecture that no such strongly subquadratic algorithm exists for superlogarithmic dimension $d = \omega(\log n)$. This conjecture (known as Low-dimension Orthogonal Vector Conjecture) is also implied by SETH~\cite{Williams05}. Both the upper bound for $d=O(\log n)$ and the SETH-implied hardness for $d=\omega(\log n)$ generalize to $k$-OV, for any constant $k \geq 2$, where the running time barrier is $n^k$~\cite{VVW18}.

\paragraph{Average-case complexity.}
For cryptographic purposes we care about \emph{average-case} hardness --
because we want to be able to efficiently sample instances that are hard to solve (in contrast to only having an existential knowledge that there are some hard instances). Moreover, the sampler shall correctly tell (with good probability) whether its output is a \yes- or \no-instance.

One way to achieve this is to embed a solution in an instance sampled from a distribution that generates \no-instances with good probability. This method of \emph{planting} a solution has been applied to a number of problems, e.g., $k$-Clique~\cite{JuelsP00} and (in the fine-grained setting) $k$-SUM~\cite{DinurKK24,AgrawalSSVV24} (a generalization of 3SUM) and Zero-$k$-Clique~\cite{LaVigneLW19} (a generalization of Zero-Triangle, which is a problem harder than both 3SUM and APSP), but not for ($k$-)OV. The following question remains wide-open~\cite{BRSV17,DalirrooyfardLW20,DalirrooyfardLS25}:
\begin{center}
\emph{How to plant orthogonal vectors (so that they are hard to find)?}
\end{center}

\subsection{Our results}

We propose a way of planting a solution in $k$-OV instances where each vector entry is i.i.d.~according to a $p$-biased\footnote{We say that a random bit is $p$-biased if it equals $1$ with probability $p$ and equals $0$ with probability $1-p$.} coin flip, for an appropriately chosen value of $p$ so that the planted solution is the only one in the instance, with good probability. We conjecture that solving these instances requires $n^{k-o(1)}$ time on average.

Let us remark that all our results are already nontrivial for $k=2$, i.e., for the Orthogonal Vectors problem. However, from the point of view of cryptographic applications, larger values of $k$ are more interesting (as they potentially offer a bigger advantage for the honest parties), so we present all our results in full generality.

\paragraph{Superlogarithmic dimension.}
The $k$-OV problem might have appeared as a poor candidate for a fine-grained average-case hard problem, as Kane and Williams~\cite{KaneW19} showed that for any fixed \(p \in (0,1)\), $k$-OV instances of i.i.d.~$p$-biased entries can be solved in \(O(n^{k-\varepsilon_p})\) time for some \(\varepsilon_p > 0\) by AC0 circuits.
However, such instances are only nontrivial for \(d = \Theta(\log n)\),\footnote{For larger (resp.\ smaller) $d$, almost all instances will be \no-instances (resp.\ \yes-instances).} a parameter setting which can be anyway solved in time $O(n^{k - \varepsilon})$, even in the worst case, using the algorithm of Chan and Williams~\cite{ChanW21}.
To obtain a candidate hard distribution based on i.i.d.\ entries, we therefore choose to sample the entries as $1$ with subconstant probability $p(n) = o(1)$, which leads to nontrivial instances in the superlogarithmic dimension regime $d = \omega(\log n)$. In
Section~\ref{section:attack}
we present another simple argument why a logarithmic dimension is not sufficient, further justifying our choice.

\paragraph{The $(k-1)$-wise independence.}
Our planting procedure has the following notable property: any subset of $k-1$ (or less) out of the $k$ vectors that form the planted solution has the marginal distribution identical to that of $k-1$ independently sampled vectors with i.i.d.~$p$-biased random entries. In particular, each individual vector of the solution has the same marginal distribution as any other vector in the instance. This would not be true if we planted $k$ random vectors conditioned on orthogonality (i.e., a type of solution that may appear spontaneously with small probability), because such vectors tend to be sparser than the expectation. This sparsity is what makes the Kane--Williams algorithm~\cite{KaneW19} work, and lack thereof makes our instances immune to that algorithmic idea.\footnote{Note though that the Kane--Williams algorithm does not run in truly subquadratic time in superlogarithmic dimension anyway, so above all it is the high dimension, not the $(k-1)$-wise independence, that makes our distribution immune to all known attacks.}

We note that the $(k-1)$-wise independence property holds ``for free'' in natural distributions for $k$-SUM~\cite{AgrawalSSVV24,DinurKK24} and Zero-$k$-Clique~\cite{LaVigneLW19} because of the natural symmetry of cyclic groups $\mathbb{Z}_m$. However, it is a priori unclear how to get it for $k$-OV.

Finally, in Theorem~\ref{thm:unique}, we argue that our distribution is \emph{the unique} distribution over $k$-OV instances that has this property, explaining the title of this paper.

\paragraph{Search-to-decision reductions.}
To demonstrate the usefulness of the $(k-1)$-wise independence property, we give a fine-grained average-case search-to-decision reduction for our conjectured hard $k$-OV distribution. Actually, we give two such reductions. The first one, in Section~\ref{section:search2decision}, is very simple, but it introduces an $O(\log n)$ overhead in the failure probability, so it is relevant only if the decision algorithm succeeds with probability higher than $1-\frac{1}{\log n}$. The other reduction, in Section~\ref{section:search2decision2}, looses only a constant factor in the failure probability. Even though we present both reductions specifically for $k$-OV, they are likely to generalize to any planted problem with the $(k-1)$-wise independence property.

\paragraph{Planting multiple solutions.}
We also argue that $(k-1)$-wise independence allow planting more than one solution in a single instance, which we believe might be useful for building cryptographic primitives.

\subsection{Technical overview}

\paragraph{Planting.}
How do we generate $k$ orthogonal vectors such that any $k-1$ of them look innocent? First of all, we can focus on generating a single coordinate, and then repeat the process independently for each of the $d$ coordinates. Consider the joint distribution of $k$ i.i.d.~$p$-biased random bits. We need to modify it to set the probability of $k$ ones to $0$. If we just do it, and scale up the remaining probabilities accordingly, the probability of $k-1$ ones turns out wrong. After we fix that, the probability of $k-2$ ones is off, and so on, in a manner similar to the inclusion-exclusion principle. By doing this mental exercise we end up with a formula for the joint distribution of $k$ bits in a single coordinate of the $k$ vectors to be planted. How do we actually sample from this distribution? Since it has the $(k-1)$-wise independence property, the following approach must work: First sample $k-1$ i.i.d.~$p$-biased bits, and then sample the $k$-th bit with probability depending on the number of ones among the first $k-1$ bits. In Section~\ref{section:planted_dist} we show how to set this last probability exactly.

\paragraph{Search-to-decision reductions.}
Both our reductions are based on the same basic idea: In order to find the planted solution, we replace some of the vectors in the input instance with newly sampled vectors with i.i.d.~$p$-biased entries and run the decision algorithm on such a modified instance. If at least one of the planted vectors got resampled, the resulting instance has the same distribution as if no planting occurred (thanks to the $(k-1)$-wise independence), and so the decision algorithm returns \no{} with good probability. Otherwise the planted solution is still there and the decision algorithm likely says \yes.

Our first reduction (see Section~\ref{section:search2decision}) applies this idea to perform a binary search. It introduces a factor of $k \log n$ overhead in the running time and also in the failure probability, because we need to take a union bound over all invocations of the decision algorithm returning correct answers.

Our second reduction (see Section~\ref{section:search2decision2}) is an adaptation of a search-to-decision reduction for $k$-SUM due to Agrawal et al.~\cite{AgrawalSSVV24}. In short, the reduction repeatedly resamples a random subset of vectors, runs the decision algorithm, and keeps track for each of the original vectors, how many times the decision algorithm returned \yes{} when this vector was not resampled. Statistically, this count should be larger for vectors in the planted solution. A careful probabilistic analysis shows that this is indeed the case.

\subsection{Open problems}

\paragraph{Hardness self-amplification.}
Could a black-box method increase the success probability of algorithms solving (the search or decision variants of) the planted $k$-OV problem, at a small cost in the running time?
If so, the lack of algorithms with high success probability for planted $k$-OV would then suggest that no algorithm can solve the problem even with just a small success probability -- a property desirable, e.g., from the point of view of potential cryptographic applications.

Such hardness self-amplification was recently shown for (both the search and decision variants of) the planted clique problem by Hirahara and Shimizu~\cite{HS23}. In the world of fine-grained complexity, Agrawal et al.~\cite{AgrawalSSVV24} showed hardness self-amplification for the planted $k$-SUM search problem and closely related problems. Hardness self-amplification for planted $k$-OV remains an open problem.

Because of the overhead in the failure probability induced by both of our search-to-decision reductions, hardness self-amplification for the decision variant of the planted $k$-OV problem in particular would mean that the hardness of the search problem could be based on a weak conjecture about the hardness of the decision problem, such as Conjecture~\ref{conjecture:planted-decision-kov}.

\paragraph{Fine-grained asymmetric cryptography.}
A key goal of fine-grained cryptography is to devise an advanced asymmetric cryptography scheme -- such as public key encryption -- whose security is based on hardness of a well understood problem from fine-grained complexity. So far the closest to this goal seems to be the key exchange protocol due to LaVigne, Lincoln, and Vassilevska Williams~\cite{LaVigneLW19}, which is based on hardness of the planted Zero-$k$-Clique problem. Despite being based on a parameterized problem (that allows for arbitrary polynomial $n^{k-o(1)}$-hardness by simply choosing a large enough $k$), the protocol offers only quadratic security, i.e., breaking the encryption takes only quadratically more than it takes to encrypt and decrypt a message. This limitation seems inherent to the protocol because it is based on a similar idea as Merkle puzzles~\cite{Merkle78}.

It is an open problem if fine-grained cryptography with superquadratic security is possible. We believe that $k$-OV could be a good hard problem for that purpose, because of a different structure, which addition-based problems, like $k$-SUM and Zero-$k$-Clique, are lacking. % It was not considered before because there was no candidate hard distribution. Now it is time to revisit the problem.

We remark that the key exchange protocol of LaVigne, Lincoln, and Vassilevska Williams~\cite{LaVigneLW19} can be adapted to work with our planted $k$-OV instead of the planted Zero-$k$-Clique problem, but naturally the protocol's security remains quadratic. One needs new techniques to break the quadratic barrier.

In recent work, Alman, Huang, and Yeo~\cite{AlmanHY25} show that if one-way functions do not exist then average-case hardness fine-grained assumptions on planted $k$-SUM and Zero-$k$-Clique are false for sufficiently large constant $k$. It might be possible to generalize their results to $k$-OV. However, a construction of public-key encryption from planted $k$-OV would be interesting even in a world where one-way functions do exist, as they are not known to imply superquadratic-gap public-key encryption~\cite{ImpagliazzoR89}.

\paragraph{Faster algorithms for average-case OV.}
Algorithms for random OV instances seem to be underexplored. Up until recently~\cite{AlmanAZ25} it was not known if the average-case OV admits even a subpolynomial improvement compared to the worst case. With this paper we hope to inspire more research in this direction. We would even be happy to see our conjecture refuted.

A natural starting point for such an attack is the recent Alman--Andoni--Zhang algorithm~\cite{AlmanAZ25} for random OV. It works in time $n^{2-\Omega(\log \log c / \log c)}$ for dimension $d=c \log n$, so it is not truly subquadratic for $d=\omega(\log n)$. However, in their setting, the hardest case is when the probability $p$ of a one entry is chosen to make the \emph{expected number of orthogonal pairs} a constant, while in our setting we use a higher value of $p$ to lower the expectation to inverse polynomial -- which means that in our setting the orthogonal pair ``stands out more''. It might seem plausible to adjust internal parameters of the algorithm (in particular, the so-called \emph{group size}) to better exploit our setting. However, under closer inspection, it turns out that the key quantity in the analysis of the algorithm is the \emph{expected inner product} of two random vectors, equal to $p^2d$, which happens to be $\Theta(\log n)$ in both settings. Refuting our conjecture likely requires a new technical development beyond such an adjustment.

Finally, let us point out a related problem: the planted approximate maximum inner product problem, often referred to as the \emph{light bulb} problem. Unlike planted OV, it is known to admit truly subquadratic $O(n^{1.582})$-time algorithms~\cite{Valiant15,KarppaKK18,KarppaKKC20,Alman19}. This is in contrast with the worst-case complexities of the two problems, which are known to be equivalent under fine-grained reductions~\cite{ChenW19,SM19}.

\paragraph{A worst-case to average-case reduction.}
In the opposite direction than the previous open problem, one could try to show that our conjectured hardness of the planted $k$-OV problem is implied by one of well-studied worst-case hypotheses in fine-grained complexity, e.g., SETH. This would require a worst-case to average-case reduction. So far, in fine-grained complexity, such reductions are only known for algebraic or counting problems~\cite{BRSV17, GoldreichR18, Boix-AdseraBB19, DalirrooyfardLW20, DalirrooyfardLS25}, but not for decision nor search problems like ours.

\section{The model distribution}
Fix $k \geq 2$, and let $d=\alpha(n)\log n$ for $\alpha(n) = \omega(1)$. We define the family of \emph{model distributions} $\Model$ that generate $k$ matrices $U_1,...,U_k \in \{0,1\}^{n\times d}$ where all entries are i.i.d.\ $p$-biased bits with probability
\[
  p = \left( 1 - 2^{-\frac{2k}{\alpha(n)}} \right)^\frac{1}{k} \approx \left(\frac{2k\ln(2)}{\alpha(n)}\right)^{\frac{1}{k}}.
\]
As will become apparent later, for the planting algorithm to work it is crucial that $p \leq 1/2$, but thanks to $\alpha(n) = \omega(1)$ this holds for large enough $n$.

We show that the model distribution indeed generates \no-solutions with good probability.

\begin{lemma}
A $k$-OV instance sampled from the model distribution $\Model$ is a \no-instance with probability at least $1-\frac{1}{n^k}$.
\end{lemma}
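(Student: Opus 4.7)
The plan is to union-bound over all $n^k$ candidate tuples, after computing the orthogonality probability of a fixed tuple. The value of $p$ in the definition of $\Model$ has been chosen precisely to make this bound go through, so the proof is essentially a direct calculation; there is no real obstacle to surmount, only a verification that the constants line up.

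First I would fix an arbitrary index tuple $(s_1,\dots,s_k) \in [n]^k$ and analyse the event that $U_{1,s_1},\dots,U_{k,s_k}$ are orthogonal. Since all $kd$ entries involved are i.i.d.\ $p$-biased bits, I look at one coordinate $j \in [d]$: the $k$-th coordinate contribution is $1$ (i.e.\ fails to witness orthogonality) exactly when $\prod_{\ell} U_{\ell,s_\ell}[j] = 1$, which happens with probability $p^k$. Hence that coordinate is ``good'' (has at least one zero) with probability $1-p^k$. By the choice $p = (1-2^{-2k/\alpha(n)})^{1/k}$, this equals $1-p^k = 2^{-2k/\alpha(n)}$.

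Next I use independence across the $d$ coordinates to multiply: the probability that the fixed tuple is orthogonal is
\[
(1-p^k)^d = 2^{-2kd/\alpha(n)} = 2^{-2k\log n} = n^{-2k},
\]
where I plugged in $d = \alpha(n)\log n$. Finally a union bound over all $n^k$ choices of $(s_1,\dots,s_k)$ gives that the probability the instance contains any orthogonal tuple is at most $n^k \cdot n^{-2k} = n^{-k}$. Taking complements yields the claimed bound $1 - 1/n^k$ on the no-instance probability, completing the proof.
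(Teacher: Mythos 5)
Your proposal is correct and follows essentially the same route as the paper: you compute the per-tuple orthogonality probability $(1-p^k)^d = n^{-2k}$ exactly as the paper does, and your union bound over the $n^k$ tuples is just the first-moment argument the paper phrases via the expected number of solutions.
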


\begin{proof}
For a $k$-OV instance $\OV = (U_1,\ldots,U_k) \sim \Model$, a fixed combination of vectors $u_1, \ldots, u_k$ (where $u_\ell \in U_\ell$) is orthogonal iff, for each coordinate $j \in [d]$, not all of the $k$ vectors feature a one in that coordinate. Since $k$ i.i.d.\ $p$-biased bits are all ones with probability $p^k$, the probability that $u_1,\ldots,u_k$ are orthogonal (determined by the all-ones event not occurring in any of the $d$ coordinates) is:
\[
  \Pr[u_1,...,u_k \text{ are orthogonal}] = \left(1 - p^k\right)^d = \left(2^{-\frac{2k}{\alpha(n)}}\right)^{\alpha(n)\log(n)} = n^{-2k}.
\]

By linearity of expectation, the expected value for the number of solutions among all $n^k$ possible combinations of $k$ vectors, denoted by $c(\OV)$, is
\[
  \mathbb{E}[c(\OV)] = \sum_{\substack{u_i \in U_i\\(1 \leq i \leq k)}} \Pr[u_1,\ldots,u_k \text{ are orthogonal}] = n^k \cdot n^{-2k} = \frac{1}{n^k}.
\]
By Markov's inequality, this is also a bound on the probability of \emph{any} solution occurring, i.e., $\Pr[c(\OV) \geq 1] \leq \mathbb{E}[c(\OV)] = \frac{1}{n^k}$.
Therefore, an instance sampled from $\Model$ is a \no-instance with probability at least $1-\frac{1}{n^k}$.
\end{proof}

We remark that one can make the probability of sampling a \no-instance arbitrarily high. Indeed, in order to get the probability $1-\frac{1}{n^c}$ it suffices to replace $2k$ with $k+c$ in the formula for the probability parameter $p$. However, having in mind the cryptographic motivation, $\frac{1}{n^k}$ seems to be a reasonable default choice for the failure probability of the sampler, because with the same probability the attacker can just guess the solution.

\section{The planted distribution}
\label{section:planted_dist}

To plant a solution at locations $s_1,\ldots,s_k \in [n]$ in an instance $\OV$ sampled from $\Model$, we apply the following randomized algorithm.

\paragraph{$\Plant(\OV,s_1,\ldots, s_k)$:}
\begin{enumerate}
  \item For each coordinate $1 \leq j \leq d$:
    \begin{enumerate}
    \item Let $m$ be the number of ones among $U_{1,s_1}[j],\ldots,U_{k,s_k}[j]$.
    \item If $k - m$ is even, flip $U_{k,s_k}[j]$ with probability $\left( \frac{p}{1-p} \right)^{k-m}$. (Here we need $p \leq 1/2$.)
      \end{enumerate}
  \item Return $\OV$.
\end{enumerate}
We justify this way of planting in Section~\ref{section:k-1-wise-independence}.
For now, observe that if all vectors $U_{1,s_1},\ldots,U_{k,s_k}$ feature a one at coordinate $j$, we have $m = k$ and $\Plant$ flips the final bit $U_{k,s_k}[j]$ to a zero with probability
\[
\left( \frac{p}{1-p} \right)^{k-m} = \left( \frac{p}{1-p} \right)^0 = 1.
\]
On the other hand, if the last coordinate is the single zero alongside $m = k - 1$ ones, then $k - m = 1$ is odd and $\Plant$ will never break orthogonality by flipping the last bit to a one. 
Thus, $\Plant(\OV, s_1,\ldots,s_k)$ outputs a \yes-instance of $k$-OV with a solution at $s_1,\ldots,s_k$. We call the $k$ vectors at these positions the \emph{planted vectors}.

We sample \yes-instances of $k$-OV by planting a solution in an instance $\OV \sim \Model$ at locations $s_1,\ldots,s_k$ chosen uniformly at random.

\paragraph{Distribution $\Planted$:}
\begin{enumerate}
  \item Sample $\OV$ from $\Model$.
  \item Sample $(s_1,\ldots,s_k)$ uniformly at random from $[n]^k$.
  \item Return $\Plant(\OV, s_1,\ldots,s_k)$.
\end{enumerate}

The above observation about $\Plant$ immediately yields the following.
\begin{lemma}
A $k$-OV instance sampled from the planted distribution $\Planted$ is a \yes-instance with probability $1$.
\end{lemma}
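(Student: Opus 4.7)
The plan is to verify directly that for every coordinate $j \in [d]$ of an instance output by $\Planted$, not all of $U_{1,s_1}[j], \ldots, U_{k,s_k}[j]$ equal one, which by definition means the planted vectors are orthogonal and hence the instance is a \yes{} one. Since $\Plant$ processes each coordinate independently and only ever touches the entry $U_{k,s_k}[j]$, I would fix an arbitrary $j$ and do a case analysis on $m$, the number of ones at coordinate $j$ among $U_{1,s_1}[j], \ldots, U_{k,s_k}[j]$ before the possible flip.

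The two cases that really matter are $m = k$ and $m = k - 1$. When $m = k$, the condition ``$m - k$ is even'' holds, and the flip probability equals $(p/(1-p))^{0} = 1$, so $U_{k,s_k}[j]$ is deterministically set to zero, as already noted right before the lemma statement; after this, only $k-1$ ones remain at coordinate $j$. When $m = k - 1$, the quantity $m - k = -1$ is odd, so $\Plant$ performs no flip at all, and the single existing zero at coordinate $j$ (wherever it sits among the $k$ vectors) is preserved. For $m \leq k - 2$ there are already at least two zeros at coordinate $j$, so even if the step flips $U_{k,s_k}[j]$ from $0$ to $1$ (which requires $m - k$ even and uses probability $(p/(1-p))^{k-m} \leq 1$, valid because $p \leq 1/2$), at least one zero remains at coordinate $j$.

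Combining the three cases, after $\Plant$ finishes processing coordinate $j$ at least one of $U_{1,s_1}[j], \ldots, U_{k,s_k}[j]$ is zero, and since all coordinates are handled independently the same holds for every $j \in [d]$. Hence $U_{1,s_1}, \ldots, U_{k,s_k}$ are orthogonal with probability $1$, so an instance drawn from $\Planted$ is a \yes-instance with probability $1$.

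There is no real obstacle here; the lemma is essentially a verification that $\Plant$ is well designed. The only subtle design feature to double-check is that the parity condition ``$m - k$ even'' in $\Plant$ forbids the single transition that could spoil orthogonality, namely flipping $U_{k,s_k}[j]$ from $0$ to $1$ when $m = k - 1$.
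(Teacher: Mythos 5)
Your proof is correct and takes essentially the same route as the paper, whose proof consists of the observation (stated just before the lemma) that when $m = k$ the flip of $U_{k,s_k}[j]$ happens with probability $\left(\frac{p}{1-p}\right)^{0} = 1$, leaving the remaining cases implicit. Your explicit verification that a flip can never create a new all-ones coordinate---in particular that the parity condition blocks the only dangerous transition, at $m = k-1$---simply spells out what the paper leaves to the reader.
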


\section{The \texorpdfstring{$(k-1)$}{(k-1)}-wise independence of planted vectors}\label{section:k-1-wise-independence}
Our method of planting orthogonal vectors arises from the idea that for any planted problem, any proper ``piece'' of the planted solution should be indistinguishable from any comparable piece of the instance as a whole, conditioned on the latter still being consistent with being a part of a solution itself.

For example, in the case of planting a $k$-clique in a graph \(G\) this requirement is trivial. Indeed, the projection of the clique onto a smaller subset of $k' < k$ vertices yields a $k'$-clique, which are exactly those subgraphs of $G$ of size $k'$ which could feasibly belong to a solution.

In contrast to the previous example, in the case of $k$-SUM, any set of $k - 1$ elements $x_1, \dots, x_{k-1}$ in an instance could feasibly be part of a solution, as one can always construct a $k-$th number $x_k$ such that $\sum_{i=1}^k x_i = 0$.
Thus, by the principle we described, to plant a solution in an instance with i.i.d.\ uniformly random elements, the marginal distribution of the distribution of planted solutions $(x_1,\dots,x_k)$ given by any projection to $k-1$ elements should itself be uniformly random.
This holds true in the case of the planted $k$-SUM~\cite{AgrawalSSVV24}, where the planted solution is distributed uniformly over the set of all $k$-tuples that form valid $k$-SUM solutions.
The case of planted Zero-$k$-Clique~\cite{LaVigneLW19} is analogous.
For both of these problems, planting by inserting $k$ elements drawn from the model distribution conditioned on them forming a solution yields a distribution that follows the described principle.

This is different from the $k$-OV problem with a model distribution of i.i.d.\ vector entries.
Here, as with $k$-SUM and Zero-$k$-Clique, any set of $k - 1$ elements (in this case vectors) could form a solution to $k$-OV.
All that is needed is for the last vector to feature a zero in all those coordinates where the other $k-1$ all were one.
However, sparse vectors are far more likely to be part of a solutions than dense ones.
Therefore, conditioning $k$ i.i.d.\ $p$-biased vectors on being orthogonal yields a distribution which does not follow our principle: projecting onto any subset of $k' < k$ vectors results in vectors that are on average sparser than (and thus different from) $k'$ i.i.d.\ $p$-biased vectors.
As we will show now, our method of planting \emph{does} satisfy this principle: Any subset of $k-1$ planted vectors are independent and identically distributed $p$-biased vectors.

Let $M \sim \Model$ and $\OV = \Plant(M, s_1,\ldots,s_k)$.
Recall that both sampling from the model distribution $\Model$ and the planting by $\Plant$ are independent and identical for each coordinate $j \in [d]$. Hence, all $k$-bit sequences $x = (U_{1,s_1}[j], U_{2,s_2}[j], \ldots,  U_{k,s_k}[j]) \in \{0,1\}^k$, for all $j \in [d]$, are independent and identically distributed, according to a distribution whose probability density function we denote by $\Pk : \{0,1\}^k \rightarrow \mathbb{R}$.
\begin{lemma}\label{lem:marginal}
Let $x \in \{0,1\}^k$ and let $m$ be the number of ones in $x$. Then
\[
\Pk(x) = p^m(1-p)^{k-m} - (-1)^{k-m}p^k.
\]
\end{lemma}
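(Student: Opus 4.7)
The plan is to compute $\Pk(x)$ directly by tracking how the planting step transforms the initial i.i.d.\ $p$-biased bits into the final ones. Let $(b_1,\ldots,b_k)$ denote the values of $U_{1,s_1}[j],\ldots,U_{k,s_k}[j]$ immediately before the planting step (so i.i.d.\ $p$-biased) and let $(x_1,\ldots,x_k)$ denote their values afterwards. Since $\Plant$ only modifies the last coordinate, we have $b_i = x_i$ for $i < k$, while $b_k$ is either $x_k$ (no flip occurred) or $1-x_k$ (a flip occurred). Thus $\Pk(x)$ splits into two contributions indexed by the two possible values of $b_k$.

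The algebraic crux is the following cancellation: for any initial weight $m'$ whose parity matches that of $k$, the probability of sampling a configuration of weight $m'$ \emph{and} then flipping is
\[
p^{m'}(1-p)^{k-m'} \cdot \left(\frac{p}{1-p}\right)^{k-m'} = p^k,
\]
independent of $m'$. This is exactly the property that $\Plant$ was engineered to produce, and it is what makes the final formula collapse to a single additive correction $\pm p^k$.

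It then remains to case-split on the parity of $k-m$, where $m = \|x\|_1$. When $k-m$ is even, the scenario $b_k=x_k$ has $m'=m$, so a flip is a possibility and the no-flip branch contributes $p^m(1-p)^{k-m} - p^k$; the scenario $b_k = 1-x_k$ has $m' = m \pm 1$ of the opposite parity, so no flip is ever performed and this scenario contributes nothing. When $k-m$ is odd, the roles swap: $b_k=x_k$ forces no flip and contributes $p^m(1-p)^{k-m}$, while $b_k = 1-x_k$ allows a flip with joint sample-and-flip probability $p^k$ by the cancellation above. In both cases the two contributions sum to $p^m(1-p)^{k-m} - (-1)^{k-m} p^k$, matching the claim.

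The only step that requires care is the parity bookkeeping — identifying which initial weights $m'$ are flip-eligible for a given target weight $m$, and checking that the contribution of the two choices of $b_k$ is correctly zero or $p^k$. Once this is organized, the computation reduces to the one-line cancellation above. The hypothesis $p \le 1/2$ ensures $(p/(1-p))^{k-m'} \le 1$ in all relevant cases, so the flip probabilities used by $\Plant$ are indeed well-defined.
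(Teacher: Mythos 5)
Your proposal is correct and follows essentially the same route as the paper's proof: condition on the pre-planting value of the last bit, use the cancellation $p^{m'}(1-p)^{k-m'}\left(\frac{p}{1-p}\right)^{k-m'}=p^k$, and split on the parity of $k-m$. The only cosmetic difference is that you merge the paper's two odd-parity subcases (last bit $1$ versus $0$) into a single argument via the weight $m'=m\pm 1$ observation.
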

\begin{proof}
Fix a coordinate $j \in [d]$. Let $X = (M_{1,s_1}[j], M_{2,s_2}[j], \ldots, M_{k,s_k}[j])$ be the random variable denoting the entries of the $j$-th coordinate among the vectors at locations $s_1, \dots, s_k$ before planting.
We proceed by case distinction.

\proofsubparagraph{Case 1.} If $m - k$ is even, the probability of $x$ occurring in the given coordinate $j \in [d]$ of the planted solution is given by
\begin{align*}
\Pk(x)
={}& \Pr\left[ X = x \text{ and } \Plant \text{ does not flip the final bit} \right]\\
={}& p^m(1-p)^{k-m} \cdot \left( 1 - \left( \frac{p}{1-p} \right)^{k-m} \right)\\
={}& p^m(1-p)^{k-m} - 1 \cdot p^k\\
={}& p^m(1-p)^{k-m} - (-1)^{k-m} p^k.
\end{align*}

\proofsubparagraph{Case 2a.} If $m - k$ is odd and $x = y1$ for some $y \in \{0,1\}^{k-1}$, then $x = y1$ may occur either directly in the model instance, or by $y0$ (for which $m - k$ is even) occurring in the model instance and $\Plant$ flipping the final bit:
\begin{align*}
\Pk(x)
={}& \Pr\left[ X = y1 \right] + \Pr\left[ X = y0 \text{ and } \Plant \text{ flips the final bit} \right]\\
={}& p^m(1-p)^{k-m} + p^{m-1}(1-p)^{k-(m-1)} \cdot \left( \frac{p}{1-p} \right)^{k-(m-1)}\\
={}& p^m(1-p)^{k-m} + p^k\\
={}& p^m(1-p)^{k-m} - (-1)^{k-m}p^k.
\end{align*}

\proofsubparagraph{Case 2b.} Similarly, if $m - k$ is odd and $x = y0$ for some $y \in \{0,1\}^{k-1}$, then $x = y0$ can occur either directly in the model instance or by $\Plant$ flipping the final bit of the sequence $y1$:
\begin{align*}
\Pk(x)
={}& \Pr\left[ X = y0 \right] + \Pr\left[ X = y1 \text{ and } \Plant \text{ flips the final bit} \right]\\
={}& p^m(1-p)^{k-m} + p^{m+1}(1-p)^{k-(m+1)} \cdot \left( \frac{p}{1-p} \right)^{k-(m+1)}\\
={}& p^m(1-p)^{k-m} + p^k\\
={}& p^m(1-p)^{k-m} - (-1)^{m-k} p^k. \qedhere
\end{align*}
\end{proof}

\begin{remark}\label{rem:symmetry}
Despite $\Plant$ acting only on the last collection $U_k$, Lemma~\ref{lem:marginal} implies that the resulting distribution $\Planted$ is invariant under permutation of the sequence of the $k$ collections $U_1,\ldots,U_k$.
\end{remark}

Having $\Pk$ as the distribution of planted vectors, rather than, e.g., the $k$-vector joint model distribution conditioned on orthogonality, ensures \emph{$(k - 1)$-wise independence} among the planted vectors.
I.e., the projection of $k$ planted vectors onto any subset of size $k' < k$ is identically distributed to $k'$ vectors from the model distribution.

\begin{lemma}[$(k-1)$-wise independence]\label{lemma:k-1-independence}
Marginalizing any one of the $k$ bits of $\Pk$ yields $k-1$ independent $p$-biased bits.
\end{lemma}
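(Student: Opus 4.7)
The plan is to use the symmetry observation (Remark~\ref{rem:symmetry}) to reduce to a single case: marginalizing the last bit. Concretely, it suffices to show that for every $y \in \{0,1\}^{k-1}$ with $\|y\|_1 = m'$, the sum $\Pk(y0) + \Pk(y1)$ equals $p^{m'}(1-p)^{k-1-m'}$, i.e., the PDF of $k-1$ independent $p$-biased bits evaluated at $y$. Since this holds for every $y$, the marginal is a product distribution of $k-1$ independent $p$-biased bits, and by symmetry the same holds when marginalizing any other coordinate.

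The computation itself is a direct application of Lemma~\ref{lem:marginal}. I would first note that $\|y0\|_1 = m'$ and $\|y1\|_1 = m' + 1$, so exactly one of $k - m'$ and $k - m' - 1$ is even. Regardless of which, the lemma gives
\begin{align*}
\Pk(y0) &= p^{m'}(1-p)^{k-m'} - (-1)^{k-m'}p^k, \\
\Pk(y1) &= p^{m'+1}(1-p)^{k-m'-1} - (-1)^{k-m'-1}p^k.
\end{align*}
Since $(-1)^{k-m'-1} = -(-1)^{k-m'}$, the two $p^k$ correction terms cancel exactly when summed. What remains factors as
\[
p^{m'}(1-p)^{k-m'-1}\bigl[(1-p) + p\bigr] = p^{m'}(1-p)^{k-1-m'},
\]
which is the desired marginal.

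There is no real obstacle here: the main design choice behind $\Pk$ was precisely to make the signed $p^k$ correction alternate with parity of $k-m$, so that this cancellation occurs, and the remainder recombines into a product form. The only thing to be careful about is bookkeeping the sign of $(-1)^{k-m}$ when $m$ changes by one, and invoking Remark~\ref{rem:symmetry} at the end to conclude that marginalizing \emph{any} single coordinate (not just the last one touched by $\Plant$) yields the same independent distribution.
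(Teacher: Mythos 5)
Your proof is correct, but it takes a different route from the paper's. You establish the lemma by direct computation from the closed-form density of Lemma~\ref{lem:marginal}: summing $\Pk(y0)+\Pk(y1)$, observing that the signed $p^k$ correction terms cancel because $(-1)^{k-m'-1}=-(-1)^{k-m'}$, and recombining the remainder into $p^{m'}(1-p)^{k-1-m'}$, then invoking Remark~\ref{rem:symmetry} (exchangeability of $\Pk$) to cover marginalization of an arbitrary coordinate. The paper instead avoids any computation: after the same symmetry reduction to the last coordinate, it simply notes that $\Plant$ only ever modifies the bit $U_{k,s_k}[j]$, so the first $k-1$ bits of each coordinate are literally the untouched model bits and hence i.i.d.\ $p$-biased by construction. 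The paper's argument is shorter and more structural, exposing \emph{why} the property holds (the planting procedure never disturbs the other $k-1$ vectors); your algebraic argument is slightly longer but has the virtue of serving as an independent consistency check of the formula in Lemma~\ref{lem:marginal}, and it makes explicit the design principle you mention --- that the parity-alternating $p^k$ correction is exactly what makes the one-coordinate marginals collapse to the product form. Both are complete proofs of the statement.
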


\begin{proof}
By Remark~\ref{rem:symmetry} we may assume w.l.o.g.\ that the last bit is the one marginalized out.
The lemma then follows from the definition of $\Plant$, as the first $k-1$ entries of any coordinate in the planted vectors are unchanged from the model instance, and are therefore independent $p$-biased bits.
\end{proof}

This property is useful in bounding the probability of a planted instance containing a solution besides the planted one.
\begin{lemma}
A $k$-OV instance sampled from the planted distribution $\Planted$ has more than one solution with probability less than $\frac{1}{n^k}$.
\end{lemma}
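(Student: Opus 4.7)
The plan is to reduce this to the same second-moment-free union bound used for the model distribution, with the $(k-1)$-wise independence (Lemma~\ref{lemma:k-1-independence}) supplying the needed distributional equality for non-planted tuples. Concretely, I would first condition on the planted locations $(s_1,\ldots,s_k)$ (it suffices to bound the conditional probability by $1/n^k$, since the bound does not depend on the choice of these locations). Then, for any tuple $(t_1,\ldots,t_k)\in[n]^k$ distinct from $(s_1,\ldots,s_k)$, I would show that the joint distribution of the $k$ vectors $U_{1,t_1},\ldots,U_{k,t_k}$ is exactly that of $k$ independent $p$-biased vectors of dimension $d$.

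To establish this key distributional claim, partition the coordinates $[k]$ into $I=\{\ell : t_\ell = s_\ell\}$ and $J=[k]\setminus I$. Since $(t_1,\ldots,t_k)\neq (s_1,\ldots,s_k)$, we have $|I|\leq k-1$. The vectors $U_{\ell,t_\ell}$ for $\ell\in J$ are untouched by $\Plant$, hence are i.i.d.\ $p$-biased and independent of everything that happens at the planted locations. The vectors $U_{\ell,t_\ell}$ for $\ell\in I$ form a strict subset of the $k$ planted vectors, and Lemma~\ref{lemma:k-1-independence} (applied coordinate-wise, using the independence of coordinates noted before Lemma~\ref{lem:marginal}) says that any at most $k-1$ of the planted vectors are jointly distributed as i.i.d.\ $p$-biased vectors. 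Combining these two facts yields that all $k$ vectors $U_{\ell,t_\ell}$ are jointly i.i.d.\ $p$-biased.

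Once this is in hand, the computation is identical to the one in the \no-instance lemma for the model distribution: the probability that a fixed tuple of $k$ i.i.d.\ $p$-biased vectors is orthogonal equals $(1-p^k)^d = n^{-2k}$ by the choice of $p$. A union bound over the fewer than $n^k$ non-planted tuples gives probability strictly less than $n^k \cdot n^{-2k} = n^{-k}$ that some non-planted tuple is also a solution. Since this bound holds conditionally on any $(s_1,\ldots,s_k)$, it holds unconditionally.

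The only subtle step is the distributional one: one has to notice that $(k-1)$-wise independence of the \emph{planted} vectors suffices precisely because $|I|\leq k-1$, and that the non-planted vectors in $J$ contribute independence ``for free'' as they are never touched by $\Plant$. Everything else is a direct reuse of the single-tuple calculation and Markov/union bound, so no further technical difficulty is expected.
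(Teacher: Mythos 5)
Your proposal is correct and follows essentially the same route as the paper: both use the $(k-1)$-wise independence to argue that any tuple other than the planted one selects at most $k-1$ planted vectors and hence is distributed as $k$ i.i.d.\ $p$-biased vectors, each orthogonal with probability $(1-p^k)^d = n^{-2k}$. The paper phrases the final step as $\Pr[c(\OV)>1] \leq \E[c(\OV)-1] = (n^k-1)n^{-2k} < n^{-k}$, which is the same first-moment/union-bound calculation you carry out directly (your explicit $I/J$ partition just spells out the independence claim the paper states in one line).
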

\begin{proof}
While the $k$ vectors at positions $s_1, \dots, s_n$ are guaranteed to form a solution, by $(k-1)$-wise independence, \emph{all} combinations of $0 \leq k' < k$ of these vectors and $k - k'$ non-planted vectors form a set of $k$ independent $p$-biased vectors which is therefore a solution to the $k$-OV problem with probability $(1-p^k)^d = \frac{1}{n^{2k}}$.
By linearity of expectation,
\[
  \E[c(\OV)] = 1 + (1 - p^k)^d \cdot (n^k - 1) < 1 + \frac{1}{n^k},
\]
and the claim follows from Markov's inequality.
\end{proof}

\subsection{Uniqueness}
Our way of planting is unique in the following sense.
\begin{theorem}\label{thm:unique}
Let $Q : \{0,1\}^k \to \mathbb{R}$ be a probability distribution such that $Q(1^k) = 0$ and that marginalizing any one of the $k$ bits yields $k-1$ independent $p$-biased bits.
Then $Q = \Pk$.
\end{theorem}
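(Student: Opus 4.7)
The plan is to consider the signed function $R(x) := Q(x) - \Pk(x)$ on $\{0,1\}^k$ and argue that it vanishes identically. By Lemma~\ref{lem:marginal} one has $\Pk(1^k) = p^k(1-p)^0 - (-1)^0 p^k = 0$, so combined with $Q(1^k) = 0$ we obtain $R(1^k) = 0$. Moreover, the hypothesis on $Q$'s marginals together with Lemma~\ref{lemma:k-1-independence} implies that $Q$ and $\Pk$ share the same $(k-1)$-wise marginal for every choice of coordinate to marginalize out. Hence, for every $i \in [k]$ and every $y \in \{0,1\}^{k-1}$,
\[
\sum_{b \in \{0,1\}} R(y_1,\ldots,y_{i-1},b,y_i,\ldots,y_{k-1}) = 0.
\]

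Next, I would expand $R$ in the Walsh--Hadamard basis, $R(x) = \sum_{S \subseteq [k]} \widehat{R}(S)\,\chi_S(x)$, with $\chi_S(x) = \prod_{i \in S}(-1)^{x_i}$. A direct calculation shows that summing $\chi_S$ over the $i$-th coordinate yields $0$ if $i \in S$ (because $(-1)^0 + (-1)^1 = 0$) and twice the corresponding character on the remaining coordinates if $i \notin S$. The marginalization constraint at index $i$ therefore reduces to $\sum_{S : i \notin S} \widehat{R}(S)\,\chi_S(y) = 0$ for all $y \in \{0,1\}^{k-1}$, and by orthogonality of the characters on $\{0,1\}^{k-1}$ this forces $\widehat{R}(S) = 0$ for every $S$ not containing $i$. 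Ranging over all $i \in [k]$ leaves only $\widehat{R}([k])$ possibly nonzero, so $R(x) = \widehat{R}([k])\cdot(-1)^{|x|_1}$.

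Plugging $x = 1^k$ into $R(1^k) = 0$ then gives $\widehat{R}([k])\cdot(-1)^k = 0$, hence $\widehat{R}([k]) = 0$ and $R \equiv 0$, i.e.\ $Q = \Pk$. The only step that needs genuine verification is the effect of single-coordinate marginalization on the characters and the resulting simultaneous killing of all but the top Fourier coefficient; I expect this to be the main (though still routine) obstacle, and everything else is a straightforward consequence of $R(1^k)=0$. As a backup, one can avoid Fourier entirely and induct on $k - |x|_1$: the base case $x = 1^k$ is given, and whenever $|x|_1 < k$ one picks an index $i$ with $x_i = 0$ and uses the marginalization constraint at $i$ to solve for $Q(x)$ in terms of a $Q$-value at a string with strictly more ones, matching $\Pk(x)$ by the same recursion applied to $\Pk$.
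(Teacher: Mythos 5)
Your main (Fourier) argument is correct and takes a genuinely different route from the paper. The paper proves the theorem by induction on the number of zeros in $x$: writing $x = 1^m 0^{k-m}$, it marginalizes out all $k-m$ trailing bits at once, solves for $Q(x)$ in terms of values $Q(1^m y)$ with strictly more ones, and matches the resulting sum against $\Pk$ using Lemma~\ref{lemma:k-1-independence}. You instead set $R = Q - \Pk$, observe that all single-coordinate marginals of $R$ vanish (both distributions have the product of $p$-biased bits as their $(k-1)$-dimensional marginals) and that $R(1^k)=0$, and then expand $R$ in the Walsh--Hadamard basis: marginalizing coordinate $i$ kills exactly the characters $\chi_S$ with $i \in S$ and leaves a vanishing linear combination of the linearly independent characters on $\{0,1\}^{k-1}$, forcing $\widehat{R}(S)=0$ for every $S \not\ni i$; ranging over $i$ leaves only $\widehat{R}([k])$, and the point constraint $R(1^k)=0$ kills it. This is sound, and it buys a stronger structural fact for free: the space of signed functions on $\{0,1\}^k$ whose one-coordinate-out marginals all vanish is one-dimensional, spanned by the parity character $(-1)^{\|x\|_1}$, so \emph{any} single additional point value pins the distribution down --- which makes the uniqueness mechanism transparent and would adapt readily to planting constraints other than $Q(1^k)=0$. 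The paper's induction is more elementary and self-contained (no Fourier machinery) and mirrors the inclusion--exclusion intuition behind the explicit formula for $\Pk$. Your backup sketch (inducting on $k-\|x\|_1$ and using one marginalization constraint at a zero coordinate to solve for $Q(x)$) is essentially the paper's proof, differing only in that it marginalizes one bit at a time rather than all zero bits at once; both versions are fine.
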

\begin{proof}
We show that $Q(x) = \Pk(x)$ for all $x \in \{0,1\}^k$.
Let $m$ denote the number of ones in $x$.
We proceed by induction over $k - m$, i.e., the number of \emph{zeros} in $x$.
\proofsubparagraph{Base case: $k - m = 0$.}
Then $m = k$ and $x = 1^k$.
Thus $Q(x) = Q(1^k) = 0 = \Pk(x)$.

\proofsubparagraph{Inductive case: $k - m > 0$.}
We assume w.l.o.g.\ that the $k - m$ zeros are the \emph{last} bits of $x$, i.e., $x = 1^m0^{k-m}$.
Marginalizing the final $k-m > 0$ bits of $Q$ yields $k - (k - m) = m$ independent $p$-biased bits, whereby the probability of all $m$ remaining bits being ones is
\[
p^m = \sum_{y \in \{0,1\}^{k-m}} Q(1^m y) = Q(\underbrace{1^m0^{m-k}}_{= x}) + \sum_{\substack{y \in \{0,1\}^{k-m}\\y \not= 0^{k-m}}} Q(1^m y).
\]
Thereby, 
\begin{align*}
Q(x) &= p^m - \sum_{\substack{y \in \{0,1\}^{k-m}\\y \not= 0^{k-m}}} Q(1^m y)\\
&= p^m - \sum_{\substack{y \in \{0,1\}^{k-m}\\y \not= 0^{k-m}}} \Pk(1^m y) \tag*{(by the induction hypothesis)}\\
&= p^m + \Pk(x) - \sum_{y \in \{0,1\}^{k-m}} \Pk(1^m y)\\
\intertext{where the sum term is merely the probability of $1^m$ in the marginal distribution of $\Pk$, which by Lemma~\ref{lemma:k-1-independence} in turn consists of $m$ independent $p$-biased bits. Hence,}
&= p^m + \Pk(x) - p^m = \Pk(x).
\end{align*}
\end{proof}

\section{Conjectured hard problems}

In this section we formally define the problems that we conjecture to require $n^{k-o(1)}$ time.

\begin{definition}[Solving planted \textbf{decision} $k$-OV]
  Let $\mathcal{A}$ be an algorithm that given a $k$-OV instance $\OV$ outputs either 0 or 1.
  For $\alpha(n) = \Omega(1)$, we say $\mathcal{A}$ \emph{solves} the \emph{decision} $k\text{-OV}^\alpha$ problem with success probability $\delta(n)$, if for both $b \in \{0,1\}$ and large enough $n$,
  \[
    \Pr_{\OV \sim k\text{-OV}_b^\alpha(n)} [\mathcal{A}(\OV) = b] \geq \delta(n),
  \]
  where randomness is taken over both the instance $\OV$ and the random coins used by $\mathcal{A}$.
\end{definition}

Similarly, we define a notion of recovering a solution from a planted instance.
\begin{definition}[Solving planted \textbf{search} $k$-OV]
  Let $\mathcal{A}$ be an algorithm that given a $k$-OV instance $\OV$ outputs a tuple $(s_1, ..., s_k) \in \{1,...,n\}^k$.
  For a given $\alpha(n) = \Omega(1)$, we say $\mathcal{A}$ \emph{solves} the \emph{planted search $k\text{-OV}^\alpha$ problem} with success probability $\delta(n)$ if for large enough $n$,
  \[
    \Pr_{\substack{\OV \sim \Planted\\(s_1,\ldots,s_k) \gets \mathcal{A}(\OV) }} [U_{1,s_1},...,U_{k,s_k} \text{ are orthogonal}] \geq \delta(n),
  \]
  where randomness is taken over both the instance $\OV$ and the random coins used by $\mathcal{A}$.
\end{definition}

Now we are ready to formally state our main conjecture.

\begin{conjecture}\label{conjecture:planted-decision-kov}
  For any $\alpha(n) = \omega(1)$ and $\varepsilon > 0$, there exists no algorithm $\mathcal{A}$ that solves the planted decision $k$-OV problem with any constant success probability $\delta > \frac{1}{2}$ in time $O(n^{k-\varepsilon})$.
\end{conjecture}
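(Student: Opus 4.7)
The statement is a conjecture rather than a theorem, so ``proving'' it can only mean assembling the kind of evidence on which analogous fine-grained average-case assumptions (planted $k$-SUM, Zero-$k$-Clique) already rest. My plan has two prongs: a worst-case to average-case reduction attempt, and rule-outs against concrete algorithmic paradigms.

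The most desirable evidence would be a fine-grained reduction: take an adversarial $k$-OV instance, massage it into a draw from $\Planted$, and argue that any decision algorithm with constant advantage over $\frac{1}{2}$ on $\Planted$ then solves the worst-case instance in $n^{k-\varepsilon'}$ time, contradicting the low-dimensional OV conjecture implied by SETH. The natural attempt is to embed the adversarial vectors into a fresh $\Model$ sample at random locations and invoke $(k-1)$-wise independence (Lemma~\ref{lemma:k-1-independence}) to argue that no $k-1$ vectors reveal whether the embedded tuple is a planted solution. The main obstacle --- and the reason the very question ``how to plant OVs'' has stood open for so long --- is that $\{0,1\}^d$ under coordinatewise AND carries no group structure: planted $k$-SUM and Zero-$k$-Clique can randomize inputs by adding a uniform element of a cyclic group while preserving the solution, but for $k$-OV any coordinate-level randomization either destroys orthogonality of the planted tuple or introduces detectable correlations with it. I do not expect to clear this hurdle in full; even a reduction from a structured average-case source, rather than from worst case, would already be substantive progress.

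Failing that, I would exhaustively verify that known algorithms fail on $\Planted$. The Kane--Williams AC0 attack crucially exploits the fact that conditioning $k$ i.i.d.\ $p$-biased vectors on orthogonality makes the individual vectors sparser than $p$-biased; by Lemma~\ref{lemma:k-1-independence} each individual planted vector --- indeed each $(k-1)$-tuple --- is \emph{exactly} $p$-biased, so the attack does not apply. I would then check low-degree polynomial, spectral, and statistical-query distinguishers of ``degree'' below $k$: any such test touches at most $k-1$ planted entries at a time, so by the same lemma its distribution under $\Planted$ matches that under $\Model$ and it cannot distinguish. Combined with Appendix~\ref{section:attack} (ruling out $d = O(\log n)$) and the absence of strongly subquadratic worst-case algorithms for $d = \omega(\log n)$, this is the cryptanalytic support I would offer; thereafter the conjecture must stand or fall with sustained community cryptanalysis, just as SETH does in the worst case.
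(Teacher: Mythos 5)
The statement is a conjecture, and the paper contains no proof of it, only supporting evidence; your proposal correctly recognizes this and assembles essentially the same evidence the paper itself relies on: immunity to the Kane--Williams sparsity attack via the $(k-1)$-wise independence of Lemma~\ref{lemma:k-1-independence}, the downsampling argument of Appendix~\ref{section:attack} motivating $d=\omega(\log n)$, and the observation that the absence of group structure is precisely what blocks the randomization trick underlying the planted $k$-SUM and Zero-$k$-Clique settings. Since you neither claim nor deliver an actual proof, and your proposed cryptanalytic support matches the paper's own justification, there is nothing further to compare.
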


\section{Search-to-decision reduction via binary search}\label{section:search2decision}
\newcommand{\Ad}{\mathcal{A}^\mathrm{decide}}
\newcommand{\As}{\mathcal{A}^{\mathrm{search}}}
We reduce the search problem of finding the planted solution to the decision problem of determining whether an instance contains a planted solution.
This means that given a decision algorithm that can correctly distinguish whether an instance was sampled from the model or planted distribution with sufficient probability, one can recover the planted secret through this reduction.
The reduction introduces a factor $O(\log n)$ increase in both the running time and error probability of the algorithm.

The idea is to find each planted vector using something akin to binary search on each collection $U_i$.
We can split $U_i$ into two partitions of roughly equal size and run the decision algorithm twice, on instances where one of the two partitions is first replaced by newly sampled $p$-biased vectors.
The vector planted in $U_i$ is guaranteed to be replaced in one of these cases, and by $(k-1)$-wise independence the resulting instance follows the model distribution.
The search space is thus cut in half and we can recurse on this smaller search space to eventually find the planted vector.

\begin{theorem}[Search-to-decision reduction]\label{theorem:search-to-decision}
  Let $\alpha(n) = \polylog(n)$ and let $\Ad$ be an algorithm that solves the planted decision $k$-OV problem with success probability $1 - \delta(n)$ in time $T(n)$.
  Then there exists an algorithm $\As$ that solves the planted search $k$-OV problem with success probability at least
  $1 - k\lceil\log n\rceil \cdot \delta(n)$ in expected time $\Olog(T(n) + n)$.
\end{theorem}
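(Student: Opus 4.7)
The plan is to realize the binary-search idea from the paragraph preceding the theorem as a concrete algorithm and then bound its failure probability by a straightforward union bound. Concretely, $\As$ proceeds in $k$ phases; in phase $i$ it maintains a candidate set $C_i \subseteq [n]$, initially $[n]$, and halves it $\lceil \log n \rceil$ times. At each halving step it partitions $C_i = C_i^L \cup C_i^R$ into halves of (nearly) equal size, assembles a modified instance from the input by overwriting the rows of $U_i$ at positions in $C_i^R$ (together with those already overwritten in previous steps) with freshly sampled i.i.d.\ $p$-biased vectors, queries $\Ad$ on this instance, and retains whichever of $C_i^L, C_i^R$ is consistent with the oracle's answer. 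After all $k$ phases $\As$ outputs the unique element of each $C_i$ as $s_i$, optionally verifying orthogonality in negligible extra time.

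The correctness analysis hinges on a single distributional claim about each queried instance: it is distributed exactly as $\Planted$ when the true planted coordinate $s_i$ still lies in the non-overwritten portion of $U_i$, and exactly as $\Model$ otherwise. In the first case, the planted $k$-tuple survives, the overwritten rows are fresh i.i.d.\ $p$-biased vectors, and by Lemma~\ref{lemma:k-1-independence} the surviving planted row of $U_i$ is marginally $p$-biased and independent of the other rows of $U_i$; combined with the row-permutation invariance of the marginal distribution of $\Planted$ on matrices (which holds because the planting location is chosen uniformly), this yields exactly the law of $\Planted$. In the second case the planted row of $U_i$ has been replaced by a fresh $p$-biased vector, and Lemma~\ref{lemma:k-1-independence} guarantees that the remaining $k-1$ planted rows are jointly i.i.d.\ $p$-biased; combined with the fact that all other rows are i.i.d.\ $p$-biased (either fresh or original), this matches $\Model$ exactly.

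Given the claim, each of the at most $k \lceil \log n \rceil$ oracle queries errs with probability at most $\delta(n)$, so a union bound yields the stated success probability of $1 - k \lceil \log n \rceil \cdot \delta(n)$. Each step spends $T(n)$ on the oracle plus $O(nd) = \Olog(n)$ on sampling fresh vectors and assembling the modified instance (using $d = \alpha(n)\log n = \polylog(n)$), and there are $O(k \log n) = \Olog(1)$ steps, for a total expected running time of $\Olog(T(n) + n)$.

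The main subtlety I anticipate is conditioning: the sequence of queried instances depends on the oracle's earlier (possibly erroneous) answers, so the overwritten portion of $U_i$ at each step is itself a random object. The cleanest way around this is to prove the distributional claim conditionally on any fixed transcript of prior answers, using the fact that conditioning on $s_i$ lying in a non-empty subset of $[n]$ does not change the marginal distribution of $\Planted$ on matrices (again by row-permutation invariance within $U_i$). This preserves the $\delta(n)$ per-call failure bound along every execution path, so that the union bound over the at most $k \lceil \log n \rceil$ oracle calls is genuinely valid.
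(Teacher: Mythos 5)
Your proposal is correct and follows essentially the same route as the paper's proof: binary search over the candidate locations in each collection, resampling one half with fresh $p$-biased vectors, using $(k-1)$-wise independence (Lemma~\ref{lemma:k-1-independence}) to argue the queried instance is exactly $\Planted$ or $\Model$, and a union bound over the $k\lceil\log n\rceil$ oracle calls, with the same $\Olog(T(n)+n)$ runtime accounting. One small caveat on your adaptivity fix: conditioning on a fixed transcript of prior oracle answers does \emph{not} preserve the per-call error bound (the answers are functions of the instance, so conditioning skews its law); the clean repair is instead to union-bound over errors along the \emph{correct} search path, whose queries are determined by the planted location alone and hence are marginally distributed exactly as $\Planted$ or $\Model$ -- a point the paper's own proof glosses over entirely.
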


\begin{proof}
  Consider an instance $\inst = (U_1, \dots, U_k) \sim \Planted$.
  First let us focus only on recovering the location $i \in [n]$ of the planted vector in the first collection $U_1$.
  The reduction begins with the ``full'' search space $S := [n]$, and narrows it down by half in each iteration, so that the desired $i$ is recovered after $\lceil \log n \rceil$ iterations.

  At each iteration, the current search space $S$ is arbitrarily partitioned in two sets of equal size (up to one vector when $|S|$ is odd).
  The decision algorithm $\Ad$ is then executed on two new instances, where the respective sets of vectors in $U_1$ are replaced with newly sampled $p$-biased vectors.

  By the $(k-1)$-wise independence, if the vector belonging to the solution is replaced, all vectors are independently and identically distributed $p$-biased vectors, i.e., the instance is distributed according to $\Model$.
  On the other hand, if the solution survives resampling, the instance remains distributed according to $\Planted$.
  Therefore, the output of $\Ad$ is used to decide which of the two  partition blocks should be assumed as the new search space.

  The reduction is correct if $\Ad$ decides correctly at every iteration.
  Of course, $\Ad$ might fail with probability $\delta(n)$. By a union bound over all $\lceil \log n \rceil$ invocations of $\Ad$ this happens with probability at most $\lceil \log n \rceil \cdot \delta(n)$.
  Thereby $\As$ recovers the location $i$ of the first planted vector with success probability at least $1 - \lceil \log n \rceil \cdot \delta(n)$.

  As for the runtime, $\Ad$ with runtime $T(n)$ is invoked $O(\log n)$ times, and across all iterations $n-1$ vectors are resampled in total.
Since a single $p$-biased bit can be sampled in expected time $O(- \log p) = O(\log \alpha(n)) = O(\log \log n)$, sampling a $d$-dimensional vector takes $\polylog(n)$ time in expectation.
  Therefore, recovering the location of the first planted vector takes time $\Olog(T(n) + n)$.

  The same process is repeated another $k - 1$ times to recover the locations of the planted vectors among $U_2,...,U_k$.
  As $k$ is constant, this does not increase the running time asymptotically but the success probability drops to $1 - k\lceil\log n\rceil \cdot \delta(n)$.
\end{proof}

\section{Search-to-decision reduction via counters}
\label{section:search2decision2}
\newcommand{\Mix}{\mathsf{Mix}}
We present a second search-to-decision reduction, adapted from that of Agrawal et al.~\cite{AgrawalSSVV24} for planted $k$-SUM.
As in the method in Section~\ref{section:search2decision}, we use the fact that an algorithm $\Ad$ for the decision $k$-OV problem, when given a planted $k$-OV instance with some of the vectors resampled, correctly detects whether any of the planted vectors were among the resampled vectors.
However, instead of iteratively narrowing a pool of candidate vectors, we iterate this process on the entire instance, and, for each vector $u$, we keep count of the number of iterations in which $u$ survived and $\Ad$'s output was 1.
After $O(\log(n))$ iterations we output the vectors with the highest counts among each of the $k$ collections, which, as we show, coincides with the planted solution (with good probability).

\begin{theorem}[Search-to-decision reduction]\label{theorem:k-sum-style-s2d}
  For any $\alpha(n) = \polylog(n)$, if there exists an algorithm that solves the planted decision $k$-OV problem with success probability at least $1 - \delta(n)$ in time $T(n)$,
  then there exists an algorithm that solves the planted search $k$-OV problem with success probability at least $1 - 13k \cdot \delta(n) - \frac{1}{n^k}$ in expected time
  $O\left(\big(T(n) + n \polylog(n)\big) \log \frac{n}{\delta(n)}\right)$.
\end{theorem}

In more detail, let $\Mix$ be the following randomized algorithm, which takes a $k$-OV instance $\inst$ and resamples some of the vectors:

\paragraph{Algorithm $\Mix(\inst)$:}
\begin{enumerate}
  \item For each $\ell \in [k]$ and $i \in [n]$:
    \begin{enumerate}
    \item With probability $1 - 2^{-\frac{1}{k}}$, replace $U_{\ell,i}$ by a newly-sampled $p$-biased vector
      \end{enumerate}
  \item Output $\inst$
\end{enumerate}

For $\ell \in [k]$, let $R_\ell \subseteq [n]$ indicate the indices of the vectors of $U_\ell$ which are replaced by $\Mix$.
For a vector $u$ in $\inst$ and a given execution of $\Mix$, we say $u$ \emph{survives} if $\Mix$ does not replace $u$.
We say $\mathbf{s} = (s_1, \dots, s_k)$ \emph{survives} if $U_{\ell, s_\ell}$ survives for each $\ell \in [k]$, i.e., $\forall_{\ell \in [k]} s_\ell \notin R_\ell$.

Now, let $B(\mathbf{s})$ be the binary random variable indicating whether $\mathbf{s}$ survives.
Our chosen probability for $\Mix$ to resample a vector yields the following.
\begin{lemma}
For a $k$-OV instance $\inst$ and any $\mathbf{s} \in [n]^k$, $\mathbf{s}$ survives with probability one half, i.e.,
  $\Pr_{\Mix}[B(\mathbf{s}) = 1] = \frac{1}{2}$.
\end{lemma}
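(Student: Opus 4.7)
The plan is to observe that $\Mix$ independently decides, for each vector in the instance, whether to resample it. In particular, the events ``$U_{\ell, s_\ell}$ survives'' for $\ell = 1, \ldots, k$ are mutually independent, since $s_1, \ldots, s_k$ index vectors in \emph{different} collections $U_1, \ldots, U_k$, and each such event depends only on the outcome of a distinct independent coin flip inside $\Mix$.

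Given independence, I would then compute
\[
\Pr_{\Mix}[B(\mathbf{s}) = 1] = \prod_{\ell=1}^{k} \Pr_{\Mix}[U_{\ell,s_\ell} \text{ survives}] = \prod_{\ell=1}^{k} 2^{-1/k} = 2^{-1} = \frac{1}{2},
\]
using that a single vector survives (i.e., is not replaced) with probability $1 - (1 - 2^{-1/k}) = 2^{-1/k}$ according to the definition of $\Mix$.

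There is essentially no obstacle here: the lemma is just verifying that the resampling probability $1 - 2^{-1/k}$ in $\Mix$ was chosen precisely so that the product of the $k$ per-vector survival probabilities equals $1/2$. The only subtlety worth flagging explicitly is that since $\mathbf{s}$ picks \emph{one} vector from each collection, the $k$ relevant survival events use $k$ distinct independent coins of $\Mix$, which is why the multiplicativity is valid for every fixed $\mathbf{s} \in [n]^k$ (not only for $\mathbf{s}$ with distinct entries, which would be a concern if all vectors came from a single collection).
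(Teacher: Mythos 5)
Your proposal is correct and follows essentially the same route as the paper's proof: each vector is independently replaced with probability $1 - 2^{-1/k}$, so all $k$ vectors indexed by $\mathbf{s}$ survive with probability $\left(2^{-1/k}\right)^k = \frac{1}{2}$. Your extra remark that the $k$ survival events use distinct independent coins (one per collection) is a harmless elaboration of the independence the paper uses implicitly.
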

\begin{proof}
Each vector is independently picked to be replaced with probability $1 - 2^{-\frac{1}{k}}$.
The chance of all $k$ vectors surviving is
\[
  \Pr_{\Mix}[B(\mathbf{s}) = 1] = \left(1 - \left( 1 - 2^{-\frac{1}{k}} \right) \right)^k = 2^{-\frac{1}{k} \cdot k} = \frac{1}{2}.
\qedhere
\]
\end{proof}

As laid out before, our search algorithm repeatedly executes $\Mix$ and then the given decision algorithm $\Ad$. The hope is that the output of the latter correlates with the survival of the planted solution.
We therefore also keep track of which vectors survive whenever $\Ad$ believes there is still a planted solution in the instance output by $\Mix$.

\paragraph{Algorithm $\As(\inst)$:}
\begin{enumerate}
  \item Initialize a counter $C_{\ell,i} := 0$ for each $\ell \in [k]$ and $i \in [n]$.
  \item Repeat $m = \Theta(\log n)$ times:
    \begin{enumerate}
    \item $\inst[V] \gets \Mix(\inst)$
    \item $b := \Ad(\inst[V])$
    \item If $b = 1$:
    \begin{enumerate}
    \item Set $C_{\ell,i} := C_{\ell,i} + 1$ for every $U_{\ell,i}$ that was not replaced by $\Mix$
    \end{enumerate}
      \end{enumerate}
  \item Set $s_\ell := \argmax_{i \in [n]} C_{\ell,i}$ for each $\ell \in [k]$
  \item Output $\mathbf{s} = (s_1, \dots, s_k)$
\end{enumerate}

This works well for instances $\inst$ where $\Ad$ is good at detecting whether a particular solution survives.
To capture this notion, we say an instance $\inst$ is \emph{good}, if it has only one solution, at some location $\mathbf{s}$, \emph{and} the output of $\Ad$ indicates whether $\mathbf{s}$ survives except for a small constant probability, i.e.,
\[
\Pr[\Ad(\Mix(\inst)) \not= B(\mathbf{s})] < \frac{1}{12k},
\]
where the probability is taken over the internal randomness used by $\Mix$.

In the following, let $\mathsf{Sample}$ be a randomized algorithm that outputs a planted instance sampled from $\Planted$ as well as the location $\mathbf{s}$ of the planted solution.

\begin{lemma}\label{lemma:most-Us-good}
If $\Ad$ solves the planted decision $k$-OV problem with success probability at least $1 - \delta(n)$, an instance $\inst \sim \Planted$ is \emph{good} except with probability at most $12k \cdot \delta(n) + \frac{1}{n^k}$.
\end{lemma}
\begin{proof}
An instance $\inst \sim \Planted$ contains only a single solution except with probability less than $\frac{1}{n^k}$.
We will now show that for such $\inst$ with only the planted solution, the decision algorithm $\Ad$ correctly detects if this solution survives except with probability $< 12k \cdot \delta(n)$, from which the claim follows by a union bound.

First, consider the following distribution:

\paragraph{Distribution $\Dist{B}$:}
\begin{enumerate}
  \item $(\inst[U], \mathbf{s}) \gets \mathsf{Sample}$
  \item $\inst[V] \gets \Mix(\inst[U])$
  \item Output $(\inst[V], B(\mathbf{s})).$
\end{enumerate}

Observe that, if we condition on $B(\mathbf{s}) = 1$, the planted solution survives $\Mix$, which merely resamples some of the i.i.d.\ $p$-biased vectors in the instance.
Thus, $\inst[V]$ is distributed according to $\Planted$.
On the other hand, if $B(\mathbf{s}) = 0$, at least one of the planted vectors is replaced by a newly-sampled $p$-biased vector.
By $(k-1)$-wise independence, the subset of planted vectors that survive are i.i.d.\ $p$-biased vectors, as are all other vectors in $\inst[V]$.
Hence, conditioning on $B(\mathbf{s}) = 0$ yields the model distribution $\Model$.

Therefore, for the algorithm $\Ad$, which solves the planted decision $k$-OV problem with success probability at least $1 - \delta(n)$, we have
\[
  \Pr_{(\inst[V], B) \gets \Dist{B}}[\Ad(\inst[V]) \not= B] \leq \delta(n).
\]
Now, let $Z(\OV, \mathbf{s})$ be the random variable that, for a given instance $\inst$ with a solution planted at $\mathbf{s}$, denotes the probability of $\Ad(\Mix(U)) \not= B(\mathbf{s})$, where the randomness is taken over the internal coins used by $\Mix$.
Then
\begin{align*}
  \delta(n) &\geq \Pr_{(\inst[V], B) \gets \Dist{B}}[\Ad(\inst[V]) \not= B]\\
  &= \E_{\substack{(\inst, \mathbf{s}) \gets \mathsf{Sample}}} \left[ \Pr_{\Mix}[\Ad(\Mix(\inst)) \not= B(\mathbf{s})] \right]\\
  &= \E_{(\inst, \mathbf{s}) \gets \mathsf{Sample}} \left[ Z(\inst, \mathbf{s}) \right].
\end{align*}
By Markov's inequality,
\[
  \Pr_{(\inst, \mathbf{s}) \gets \mathsf{Sample}} \left[ Z(\inst, \mathbf{s}) \geq \frac{1}{12k} \right]
  \leq 12k \E[Z(\inst, \mathbf{s})] \leq 12 k \cdot \delta(n).
\]

Next, observe that $\mathbf{s}$ is the only solution in the instance $\inst$ output by $\mathsf{Sample}$ with good probability,
\[
  \Pr_{(\inst, \mathbf{s}) \gets \mathsf{Sample}} \left[ \inst  \text{ has a solution besides } \mathbf{s} \right]
  = \Pr_{\inst \gets \Planted} \left[ c(\inst) > 1 \right] < \frac{1}{n^k}.
\]
Thus, by a union bound over this and our result in the first step, we find that an instance $\inst \sim \Planted$ is good except with probability at most $12k \cdot \delta(n) + \frac{1}{n^k}$:
\begin{align*}
  & \Pr_{\inst \gets \Planted} \left[ \inst \text{ is good} \right]\\
  ={}& \Pr_{\inst, \mathbf{s} \gets \mathsf{Sample}} \left[ \mathbf{s} \text{ is the only solution of } \inst \text{ and } Z(\inst, \mathbf{s}) < \frac{1}{12k} \right]\\
  \geq{}& 1 - \Pr_{\inst, \mathbf{s} \gets \mathsf{Sample}} \left[\inst[U] \text{ has a solution besides } \mathbf{s} \right]
             - \Pr_{\inst, \mathbf{s} \gets \mathsf{Sample}} \left[ Z(\inst, \mathbf{s}) \geq \frac{1}{12k} \right]\\
  >{}& 1 - \frac{1}{n^k} - 12k \cdot \delta(n). \qedhere
\end{align*}
\end{proof}

We now show that the search algorithm performs well on this large fraction of good instances.
\begin{lemma}\label{lemma:asearch-works-on-good-Us}
Let $\Ad$ be an algorithm that solves the planted decision $k$-OV problem with success probability $1 - \delta(n)$.
Then $\As$ fails to recover the solution $\mathbf{s}$ of good instances with probability less than $\delta(n)$.
\end{lemma}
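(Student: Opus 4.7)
Fix an arbitrary good instance $\inst$ with its unique planted solution $\mathbf{s} = (s_1, \ldots, s_k)$, and write $b := \Ad(\Mix(\inst))$ for the random bit produced inside a single round of $\As$. My plan is to show that, for this fixed $\inst$, the counter for each planted position $(\ell, s_\ell)$ increments with a visibly higher probability than the counter for any non-planted position $(\ell, i)$ with $i \neq s_\ell$, by a gap that is a positive constant depending only on $k$. Running $m = \Theta(\log n)$ independent rounds and invoking a Hoeffding tail bound plus a union bound over the at most $kn$ relevant pairs will then force each planted vector to win its $\argmax$, delivering the claimed $1 - \delta/2$ success guarantee.

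\textbf{Per-round bias.} Let $S_\ell$ denote the event that $U_{\ell, s_\ell}$ survives $\Mix$, so $B(\mathbf{s}) = S_1 \wedge \cdots \wedge S_k$ and $\Pr[B(\mathbf{s}) = 1] = 1/2$. Since $B(\mathbf{s}) = 1$ forces $S_\ell = 1$, the probability of incrementing a planted counter in one round is
\[
\Pr[S_\ell \cap b = 1] \geq \Pr[B(\mathbf{s}) = 1 \cap b = 1] \geq \tfrac{1}{2} - \Pr[b \neq B(\mathbf{s})] \geq \tfrac{1}{2} - Z(\inst, \mathbf{s}).
\]
For a non-planted position $(\ell, i)$, the survival event $S'_{\ell, i}$ reads a coordinate of $\Mix$'s coin tape disjoint from those read by $B(\mathbf{s})$ (since $i \neq s_\ell$), so the two events are independent. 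Hence
\[
\Pr[S'_{\ell, i} \cap b = 1] \leq \Pr[S'_{\ell, i} \cap B(\mathbf{s}) = 1] + \Pr[B(\mathbf{s}) = 0 \cap b = 1] \leq \tfrac{1}{2} \cdot 2^{-1/k} + Z(\inst, \mathbf{s}).
\]
Using the goodness bound $Z(\inst, \mathbf{s}) < \tfrac{1}{8}(1 - 2^{-1/k})$, the difference of the two probabilities is at least $\tfrac{1}{4}(1 - 2^{-1/k})$, a positive constant depending only on $k$.

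\textbf{Concentration and union bound.} For each pair $((\ell, s_\ell), (\ell, i))$, write $\Delta_t \in \{-1, 0, 1\}$ for the difference of the two increment indicators in round $t$. The $\Delta_t$ are i.i.d.\ across the $m$ rounds with mean at least $\tfrac{1}{4}(1 - 2^{-1/k})$, so Hoeffding gives $\Pr[C_{\ell, s_\ell} \leq C_{\ell, i}] \leq \exp(-c_k \cdot m)$ for some $c_k > 0$. Choosing $m = C_k \log(n / \delta)$ with a sufficiently large $k$-dependent constant and union-bounding over the at most $kn$ pairs drives the total failure probability strictly below $\delta/2$, so $\As$ correctly outputs $(\argmax_i C_{1, i}, \ldots, \argmax_i C_{k, i}) = \mathbf{s}$ with probability exceeding $1 - \delta/2$.

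\textbf{Anticipated obstacle.} The one delicate point is that $b$ is not independent of the per-vector survival events: $\Ad$'s input is literally a function of which positions $\Mix$ resampled, so naively conditioning on $S'_{\ell, i}$ would change the distribution of $\Ad$'s input. I sidestep this by never conditioning on $S'_{\ell, i}$ directly and instead splitting probabilities on $B(\mathbf{s})$; the only probabilistic facts used are the purely combinatorial independence of $S'_{\ell, i}$ and $B(\mathbf{s})$ (disjoint coordinates of $\Mix$'s coin tape) and the goodness bound on $\Pr[b \neq B(\mathbf{s})]$.
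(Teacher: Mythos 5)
Your proposal is correct and follows essentially the same argument as the paper: the same per-round increment bounds (planted $\geq \tfrac12 - Z$, non-planted $\leq \tfrac12\cdot 2^{-1/k} + Z$, using independence of $B(\mathbf{s})$ from the survival of a non-planted vector and the goodness bound on $Z$), followed by concentration over $m=\Theta(\log n)$ rounds and a union bound over the $kn$ positions. The only cosmetic difference is that you apply Hoeffding to the pairwise counter differences $C_{\ell,s_\ell}-C_{\ell,i}$, whereas the paper bounds each counter's deviation from its expectation by half the expectation gap and then union-bounds; both yield the same $\delta/2$ failure bound.
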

\begin{proof}
Let $\inst$ be a good instance with its only solution at $\mathbf{s}$.
After $t$ iterations, we expect the counters for vectors in the solution $\mathbf{s}$ to be the highest. Using the fact that $\frac{1}{\sqrt[k]{2}} < 1 - \frac{1}{2k}$, we find that for non-planted vectors, i.e., where $s_\ell \not = i$,
\begin{align}
  \E[C_{\ell,i}]
  &= t \cdot \Pr_{\Mix} \left[ U_{\ell,i} \text{ survives and } \Ad(\Mix(\inst)) = 1 \right] \tag{$\ast$}\label{eq:exp-non-planted}
  \\ &\leq t \cdot \left( \Pr_{\Mix} \left[ U_{\ell,i} \text{ survives and } B(\mathbf{s}) = 1\right] + \Pr_{\Mix} \left[ \Ad(\Mix(\inst)) \not= B(\mathbf{s}) \right] \right) \notag
  \\ &< t \cdot \left( \frac{1}{\sqrt[k]{2}} \cdot \frac{1}{2} + \frac{1}{12k} \right)
  < t \cdot \left(\left(1 - \frac{1}{2k}\right)\frac{1}{2} + \frac{1}{12k} \right) \notag\\
  &= t \left(\frac{1}{2} - \frac{2}{12k} \right). \notag
       \\[10pt] 
       \intertext{On the other hand, for planted vectors, i.e., where $s_\ell = i$,}
  \E[C_{\ell,i}]
  &\geq t \cdot \Pr_{\Mix} \left[ \mathbf{s} \text{ survives and } \Ad(\Mix(\inst)) = 1 \right] \tag{$\ast\ast$}\label{eq:exp-planted}\\
  &= t \cdot \Pr_{\Mix} \left[ B(\mathbf{s}) = 1 \text{ and } \Ad(\Mix(\inst)) = B(\mathbf{s}) \right] \notag\\
  &\geq t \cdot \left(1 - \Pr_{\Mix} \left[ B(\mathbf{s}) = 0\right] - \Pr_{\Mix} \left[ \Ad(\Mix(\inst)) \not= B(\mathbf{s}) \right] \right) \notag\\
  &> t \cdot \left(1 - \frac{1}{2} - \frac{1}{12k} \right)
  = t \cdot \left(\frac{1}{2} - \frac{1}{12k} \right).\notag
\end{align}
Picking the $k$ highest counters is guaranteed to yield the solution $\mathbf{s}$ if the ranges of the counters of the planted and non-planted vectors do not overlap, that is to say if no counter deviates from its expected value by half the difference of (the bounds on) the two expected values~\eqref{eq:exp-planted} and~\eqref{eq:exp-non-planted}, which we denote by $\Delta$:
\[
  \Delta := \frac{1}{2} \left[t \cdot \left(\frac{1}{2} - \frac{1}{12k} \right)
               - t \cdot \left( \frac{1}{2} - \frac{2}{12k} \right) \right] = t \frac{1}{24k}.
\]

Each counter $C_{\ell,i}$ is the sum of $t$ i.i.d.\ binary random variables.
By a Chernoff bound, a counter $C_{\ell,i}$ for a vector which is not in the planted solution, i.e., $s_\ell \not= i$, exceeds its expected value by $\Delta$ with probability at most
\[
  \Pr \left[ C_{\ell,i} \geq \E[C_{\ell,i}] + \Delta \right]
  < \mathrm{exp}\left( - 2t \left( \frac{\Delta}{t}\right)^2\right)
  = \mathrm{exp}\left( - \frac{2t}{24^2 \cdot k^2}\right).
\]
Similarly, a counter $C_{\ell,i}$ for a vector that \emph{is} part of a planted solution ($s_\ell = i$) falls short of its expected value by $\Delta$ with at most the same probability
\[
  \Pr \left[ C_{\ell,i} \leq \E[C_{\ell,i}] - \Delta \right]
  < \mathrm{exp}\left( - 2t \left( \frac{\Delta}{t}\right)^2\right)
  = \mathrm{exp}\left( - \frac{2t}{24^2 \cdot k^2}\right).
\]
For a choice of $t = \frac{24^2 \cdot k^2}{2} \cdot \log \frac{kn}{\delta(n)} = O\left(\log \frac{n}{\delta(n)}\right)$ iterations, both of these probabilities are at most $\frac{\delta(n)}{kn}$.
If \emph{none} of the $k \cdot n$ counters deviate by $\Delta$, the ranges of counters for vectors at $\mathbf{s}$ and those for vectors not at $\mathbf{s}$ are disjoint and selecting for the highest counters is guaranteed to yield $\mathbf{s}$.
Thus, by a union bound over all $k \cdot n$ counters, we fail to recover $\mathbf{s}$ with probability at most $kn \cdot \frac{\delta(n)}{kn} = \delta(n)$.
\end{proof}

We can now complete the proof of Theorem~\ref{theorem:k-sum-style-s2d}.
\begin{proof}[Proof of Theorem \ref{theorem:k-sum-style-s2d}]
By Lemma~\ref{lemma:most-Us-good}, an instance $\inst \sim \Planted$ is good except with probability at most $12k \cdot \delta(n) + \frac{1}{n^k}$.
By Lemma~\ref{lemma:asearch-works-on-good-Us}, $\As$ is able to recover the solution $\mathbf{s}$ from a good instance except with probability at most $\delta(n)$.
By a union bound against the instance not being good or $\As$ failing on a good instance, $\As$ solves the planted search $k$-OV problem with success probability at least $1 - 13k\delta(n) - \frac{1}{n^k}$.

In each of the
$O\left(\log \frac{n}{\delta(n)}\right)$
iterations, we execute both $\Mix$ and $\Ad$ once and update the counters.
Each execution of $\Mix$ samples, in expectation, $k \cdot n \cdot (1 - 2^{-\frac{1}{k}}) = O(n)$ new vectors, each of which can be sampled in $\polylog(n)$ expected time as explained in the proof of Theorem~\ref{theorem:search-to-decision}.
$\Ad$ runs in time $T(n)$ and updating the counters takes linear time, for the total expected runtime of
$O\left(\big(T(n) + n \polylog(n)\big) \log \frac{n}{\delta(n)}\right)$.
\end{proof}

\section{Planting more solutions}
So far we have only considered planting a single solution in an instance sampled from the model distribution.
We justified our particular planted distribution by the fact it is the only way of sampling solutions (orthogonal combinations of vectors) such that every subset of $k' < k$ vectors is distributed identically to $k'$ independent $p$-biased vectors.
As a consequence, if we have multiple sets of orthogonal vectors sampled this way, any combination of $k$ vectors that are not all part of the same set are distributed identically to $k$ independent $p$-biased vectors.

We propose a family of distribution, called $m$-\emph{planted}-$\Model$, for $m \in [n]$, where $m$ sets of $k$ orthogonal vectors are sampled independently and embedded at $m$ (disjoint) locations in a $\Model$ instance.
This is a generalization of the previous planted distribution which corresponds to the case of $m = 1$.
On the other extreme, when $m = n$, all $k \cdot n$ vectors in an $m$-planted-$\Model$ instance belong to some orthogonal combination and no vector is an independently sampled $p$-biased vector.
The task of finding a single orthogonal combination of vectors is therefore trivially easier in this case than in the case of a $1$-planted $\Model$ instance, as one can simply pick any one vector arbitrarily, and find $k - 1$ vectors from the other collections to form an orthogonal combination through exhaustive search in time $O(n^{k-1}d) = o(n^k)$.
Recovering the entire solution, that is all $n$ many $k$-tuples of locations where the orthogonal combinations were planted, still takes time $\Theta(n^k d)$ when done in the naive way.

In general, a solution to the $m$-planted $\Model$ problem can be encoded as $k$ sets $S_1,...,S_k \subseteq \{1,...,n\}$ of size $m$, indicating which vectors within collection $U_1,...,U_k$ are part of an orthogonal combination, as well as $k-1$ permutations $\pi_2,\dots,\pi_k$ on the sets $S_2, \dots, S_k$ respectively that encode how these vectors match up to form orthogonal combinations.
We call $S_1,...,S_k$ the \emph{support of the solution} and $\pi_2,...,\pi_k$ the \emph{matching}.
There are therefore
\[
  \binom{n}{m}^k \cdot (m!)^{k-1}
\]
possible solutions and the information encoded in an instance amounts to
\begin{align*}
  \log\left( \binom{n}{m}^k \cdot (m!)^{k-1} \right) &= k \log \binom{n}{m} + (k-1) \log (m!)\\
                                                     &= O(m \log n)
\end{align*}

As an observation, in the case of $1$-planted $\Model$, the matching is trivial (as there is only one planted orthogonal combination), and therefore the difficulty in the problem is derived from finding the support.
On the other hand, for $n$-planted $\Model$, the support is trivial (as $S_1 = ... = S_k = \{1,...,n\}$) and the difficulty lies solely in determining the matching.

\section{The downsampling attack}
\label{section:attack}

In this section we give a simpler argument for choosing $d=\omega(\log n)$, which does not rely on the algorithms by Abboud et al.~\cite{AbboudWY15} and Chan and Williams~\cite{ChanW21}. We call it the \emph{downsampling attack}. It makes use of the fact that if we project higher dimensional orthogonal vectors onto a subset of coordinates, the vectors remain orthogonal. Moreover, if we project an entire $k$-OV instance onto less than $\log n$ coordinates, then there are less than $n$ different possible vectors, and hence we can efficiently enumerate all possible solutions. However, non-orthogonal vectors may also become orthogonal under such projection  -- as long as we ensure that such downsampling does not create too many false positives. We show that this approach leads to an $O(n^{k-\varepsilon})$-time algorithm, for $\varepsilon > 0$, when $d=O(\log n)$.

\begin{theorem}
Planted (search) $k$-OV in dimension $d=\alpha(n) \log n$ can be solved in expected time $O\left(n^{k-\frac{k}{1 + \alpha(n)}}\right)$, with the expectation taken over the instance.
\end{theorem}

\begin{proof}
The algorithm begins by looking only at the first $d' := \frac{d}{1+\alpha(n)} = \frac{\alpha(n)}{1+\alpha(n)} \log n < \log n$ coordinates of each vector. It iterates over all possible $k$-tuples of $d'$-dimensional vectors that are orthogonal. There are at most $O(2^{d'k}) = O\big(n^{k-\frac{k}{1 + \alpha(n)}}\big)$ of them. Then, for each such low-dimensional orthogonal tuple, the algorithm lists all original high-dimensional $k$-tuples that project onto it, and checks if any of them is indeed orthogonal when considering all $d$ coordinates.

The total running time of the algorithm is proportional to the number of all possible low-dimensional orthogonal tuples, which we have already bound in the previous paragraph, plus the number of high-dimensional tuples that project onto one of them, which we bound as follows. Let $p_\perp = (1-p^k)^d$ denote the probability that $k$ random vectors are orthogonal. The input distribution\footnote{This is true about our proposed distribution, but one could argue that this argument (and hence also this whole proof) applies to \emph{any} potential planted $k$-OV distribution, as otherwise it would produce in expectation at least one spurious solution.} satisfies $p_\perp \leq 1/n^k$. The probability that $k$ random vectors restricted to the first $d'$ coordinates are orthogonal is
\[(1-p^k)^{d'} = p_\perp^{(d'/d)} = p_\perp^{1/(1+\alpha(n))} \leq (1/n^k)^{1/(1+\alpha(n))} = n^{-(k/(1+\alpha(n)))}.\]
We multiply this probability by $n^k$ to get that the expected number of such $k$-tuples is $n^{k-\frac{k}{1 + \alpha(n)}}$.
\end{proof}

\section*{Acknowledgements.} We are grateful to Andrej Bogdanov, Antoine Joux, Moni Naor, Nicolas Resch, Nikolaj Schwartzbach, and Prashant Vasudevan for insightful discussions.

\bibliographystyle{plainurl}
\bibliography{paper}

\end{document}